\documentclass{article}
\usepackage[left=2.5cm, top=3cm, bottom=3cm, right=2.5cm]{geometry}
\usepackage{amsfonts}
\usepackage{amsthm}
\usepackage{amsmath}
\usepackage{enumerate}
\usepackage[numbers,sort&compress]{natbib}

\newtheorem{theorem}{Theorem}[section]
\newtheorem{lemma}{Lemma}[section]

\theoremstyle{definition}

\newcommand{\secref}[1]{Section~\ref{#1}}

\newcommand{\thmref}[1]{Theorem~\ref{#1}}

\newcommand{\itref}[1]{(\ref{#1})}

\newcommand{\lshad}{[\![}
\newcommand{\rshad}{]\!]}
\newcommand{\sdot}{\,\cdot\,}

\newcommand{\nbr}[1]{$#1$\nobreakdash-\hspace{0pt}}
\newcommand{\Pol}{\mathsf{Pol}}

\DeclareMathOperator{\id}{id}

\everymath{\displaystyle}
\numberwithin{equation}{section}

\title{On local equivalence of star-products on Poisson manifolds}
\author{Ziemowit Doma{\'n}ski\\
\small Institute of Mathematics, Pozna{\'n} University of Technology\\
\small Piotrowo 3A, 60-965 Pozna{\'n}, Poland\\
\small \tt ziemowit.domanski@put.poznan.pl \and Maciej B{\l}aszak\\
\small Faculty of Physics, Division of Mathematical Physics,
       Adam Mickiewicz University\\
\small Umultowska 85, 61-614 Pozna{\'n}, Poland\\
\small \tt blaszakm@amu.edu.pl}

\begin{document}

\maketitle

\begin{abstract}
We present a proof that every star-product defined on a Poisson manifold
and written in a given quantum canonical coordinate system is uniquely
equivalent with a Moyal product associated with this coordinate system.
The equivalence is assumed to satisfy some additional conditions which guarantee
its uniqueness. Moreover, the systematic construction of such equivalence is
presented and a formula for this equivalence in a case of a particular class of
star-products is given, to the fourth order in $\hbar$.
\\[\baselineskip]
\textbf{Keywords and phrases}: quantum mechanics, deformation quantization,
canonical coordinates, Moyal product, phase space
\end{abstract}

\section{Introduction}
\label{sec:2}
One of the admissible methods of quantization of a classical Hamiltonian system
is a deformation quantization procedure. In this procedure one deforms a
classical Poisson algebra $C^\infty(M)$ of smooth complex-valued functions
defined on a phase space $M$ (Poisson manifold) to an appropriate noncommutative
algebra \cite{Moyal:1949,Bayen:1978a,Bayen:1978b}. The noncommutative product
in this algebra is usually denoted by $\star$ and called a star-product.

The existence of a star-product on any symplectic manifold was first proved in
1983 by \citet{DeWilde:1983}. Later \citet{Fedosov:1994} gave a recursive
construction of a star-product on a symplectic manifold using the framework of
Weyl bundles. Independently, \citet{Omori:1991} gave an alternative proof of
the existence of a star-product on a symplectic manifold, also using the
framework of Weyl bundles. Finally, in 1997, \citet{Kontsevich:2003} proved the
existence of a star-product on any Poisson manifold.

Two star-products on a given Poisson manifold may not be equivalent. In fact
the equivalence classes of star-products on a symplectic manifold $M$ are
parametrized by formal series of elements in the second de~Rham cohomology space
of $M$, $H^2(M;\mathbb{C})\lshad\hbar\rshad$ \cite{Nest:1995b,Bertelson:1997,%
Deligne:1995}.

If we choose some coordinate system on a domain $\mathcal{O} \subset M$ of the
Poisson manifold $M$, then a given star-product can be written locally in this
coordinate system. The simplest case is when a coordinate representation of the
star-product is in the form of a Moyal star-product \cite{Moyal:1949}.

To each star-product corresponds a distinct class of coordinate systems, namely
quantum canonical coordinate systems. Coordinates which are canonical with
respect to one star-product do not have to be canonical with respect to the
other star-product. If $\mathcal{O} \subset M$ is a domain of some coordinate
system $\varphi \colon \mathcal{O} \to \mathbb{R}^{2N}$ then equivalence classes
of star-products written in these coordinates are parametrized by elements of
$H^2(\mathcal{O};\mathbb{C})\lshad\hbar\rshad$. The Moyal star-product is in one
of these classes. Let us denote this class by
$\mathcal{S}(\mathcal{O},\varphi)$. So every star-product on $M$ which
coordinate representation with respect to the coordinate chart
$(\mathcal{O},\varphi)$ is in the class $\mathcal{S}(\mathcal{O},\varphi)$ is
locally equivalent with the Moyal star-product. For part of these star-products,
the coordinates $(\mathcal{O},\varphi)$ are quantum canonical, like for the
Moyal product. The class of such star-products we will denote by
$\mathcal{S}_q(\mathcal{O},\varphi)$.

The star-products in $\mathcal{S}_q(\mathcal{O},\varphi)$ can be used to perform
nonequivalent quantizations of a classical Hamiltonian system. For this reason
the knowledge of morphisms relating the star-products in
$\mathcal{S}_q(\mathcal{O},\varphi)$ with the Moyal product can help in
establishing the relations between received nonequivalent quantizations.
Moreover, the fact that these star-products are equivalent with the Moyal
product is useful when constructing particular realizations of quantizations.
This is because we know how to perform quantization by means of the Moyal
product. Especially, we know how to construct an operator representation in the
Hilbert space of a quantized classical system. Thus, if for a quantum canonical
coordinate system the star-product is equivalent with a Moyal product then the
problem can be reduced to the Moyal case
\cite{Blaszak:2012,Blaszak:2013,Blaszak:2013b}. For this reason we need a form
of an equivalence morphism $S$, which gives the equivalence with the Moyal
product. In this paper we show that every star-product on a general Poisson
manifold, written in quantum canonical coordinate system on
$\mathcal{O}\subset M$, is locally (globally in particular) equivalent with the
Moyal product in these coordinates. Moreover, we present a systematic
construction of the corresponding equivalence morphism $S$, order by order in
$\hbar$ (\secref{sec:1}). In particular, we derive the form of the morphism $S$
(to the fourth order in $\hbar$) for a class of star-products on a phase space
in the form of a cotangent bundle $T^*\mathcal{Q}$ to some manifold
$\mathcal{Q}$ (configuration space), where the star-products are generated by
flat connections on $\mathcal{Q}$. It should be noted, that in this case the
\nbr{\hbar}expansion of $S$ has only finite number of terms which will give
non-zero contribution when acting on functions polynomial in momenta. For
instance, to calculate the action of $S$ on functions at most cubic in momenta
we only need $S$ to the second order in $\hbar$, and for functions at most of
fifth order in momenta the expansion of $S$ to the fourth order in $\hbar$ is
required. Further on, we consider a class of star-products on a general
symplectic manifold, generated by symplectic connections, and construct the
corresponding morphism $S$ to the second order in $\hbar$ (\secref{sec:3}).

\section{Construction of the equivalence}
\label{sec:1}
Let $(M,\mathcal{P})$ be a general Poisson manifold, where $\mathcal{P}$ is its
Poisson tensor, and $\{\sdot,\sdot\}$ a Poisson bracket associated to
$\mathcal{P}$. Denote by $\mathbb{C}\lshad\hbar\rshad$ the ring of formal power
series in the parameter $\hbar$ with coefficients in $\mathbb{C}$ and by
$C^\infty(M)\lshad\hbar\rshad$ the space of formal power series in $\hbar$ with
coefficients in $C^\infty(M)$. On the Poisson manifold $(M,\mathcal{P})$ we
define a star-product as a bilinear map
\begin{equation}
C^\infty(M) \times C^\infty(M) \to C^\infty(M)\lshad\hbar\rshad, \quad
(f,g) \mapsto f \star g = \sum_{k=0}^\infty \hbar^k C_k(f,g),
\label{eq:1.1}
\end{equation}
which extends \nbr{\mathbb{C}\lshad\hbar\rshad}linearly to
$C^\infty(M)\lshad\hbar\rshad \times C^\infty(M)\lshad\hbar\rshad$, such that
\begin{enumerate}[(i)]
\item $C_k$ are bidifferential operators,
\label{item:1.1}

\item $C_0(f,g) = fg$,
\label{item:1.2}

\item $C_1(f,g) - C_1(g,f) = i\{f,g\}$,
\label{item:1.3}

\item $\sum_{l=0}^k \bigl(C_l(C_{k-l}(f,g),h) - C_l(f,C_{k-l}(g,h))\bigr) = 0$,
\label{item:1.4}

\item $C_k(f,1) = C_k(1,f) = 0$ for $k \ge 1$.
\label{item:1.5}
\end{enumerate}
Moreover, we define a deformed Poisson bracket by the formula
\begin{equation}
\lshad f,g \rshad_\star = \frac{1}{i\hbar} [f,g]
= \frac{1}{i\hbar}(f \star g - g \star f).
\end{equation}
The \nbr{\star}product and the deformed Poisson bracket have the following
properties:
\begin{enumerate}[(a)]
\item $f \star g = fg + o(\hbar)$,
\label{item:1.6}

\item $\lshad f,g \rshad = \{f,g\} + o(\hbar)$,
\label{item:1.7}

\item $f \star (g \star h) = (f \star g) \star h$ (associativity),
\label{item:1.8}

\item $f \star 1 = 1 \star f = f$.
\label{item:1.9}
\end{enumerate}
Properties~\itref{item:1.6} and~\itref{item:1.7} follows respectively from
\itref{item:1.2} and \itref{item:1.3}, property~\itref{item:1.8} is a result of
\itref{item:1.4}, and property~\itref{item:1.9} follows from \itref{item:1.5}.
Thus the space $C^\infty(M)\lshad\hbar\rshad$ endowed with the
\nbr{\star}product and the deformed Poisson bracket $\lshad\sdot,\sdot\rshad$
is a deformation of the classical Poisson algebra $C^\infty(M)$.

On a Poisson manifold $\mathbb{R}^d$, where $d = 2n + k$, endowed with a
canonical Poisson tensor
\begin{equation}
(\mathcal{P}^{\mu\nu}) = \begin{pmatrix}
0_n & I_n & 0_k \\
-I_n & 0_n & 0_k \\
0_k & 0_k & 0_k
\end{pmatrix}
\label{eq:1.3}
\end{equation}
the simplest star-product is a Moyal product
\begin{equation}
f \star_M g = f \exp\left(\frac{1}{2}i\hbar\mathcal{P}^{\mu\nu}
    \overleftarrow{\partial}_{x^\mu} \overrightarrow{\partial}_{x^\nu}\right) g.
\label{eq:1.4}
\end{equation}

Another example of the star-product \eqref{eq:1.1} used in a quantization
procedure is a product of the form
\begin{align}
f \star g & = f \exp\left(\frac{1}{2}i\hbar\mathcal{P}^{\mu\nu}
    \overleftarrow{D}_\mu \overrightarrow{D}_\nu \right) g \nonumber \\
& = \sum_{k=0}^\infty \frac{1}{k!} \left(\frac{i\hbar}{2}\right)^k
    \mathcal{P}^{\mu_1 \nu_1} \dotsm \mathcal{P}^{\mu_k \nu_k}
    (D_{\mu_1} \dotsm D_{\mu_k}f)(D_{\nu_1} \dotsm D_{\nu_k}g),
\label{eq:1.5}
\end{align}
where $\mathcal{P}^{\mu\nu}$ are given by \eqref{eq:1.3} and $D_1,\dotsc,D_d$
are globally defined pair-wise commuting vector fields such that
\begin{equation}
\mathcal{P} = \mathcal{P}^{\mu\nu} D_\mu \otimes D_\nu.
\label{eq:1.6}
\end{equation}
The star-product \eqref{eq:1.5} is well defined on Poisson manifolds
$(M,\mathcal{P})$ for which a Poisson tensor $\mathcal{P}$ can be globally
written in the form \eqref{eq:1.6}.

The sequence of vector fields $D_1,\dotsc,D_d$ is not uniquely
specified by the condition \eqref{eq:1.6} but there exists the whole family of
such sequences. Every such sequence of vector fields defines a star-product of
the form \eqref{eq:1.5}. Thus there exists the whole family of star-products
\eqref{eq:1.5} associated to the same Poisson tensor $\mathcal{P}$. All these
star-products are related to each other by automorphisms of the Poisson
manifold $(M,\mathcal{P})$, i.e. if $\star$ and $\star'$ are star-product
\eqref{eq:1.5} induced by sequences of vector fields $D_1,\dotsc,D_d$ and
$D'_1,\dotsc,D'_d$ then there exists an automorphism $T$ such that
\begin{equation}
(D_\mu f) \circ T = D'_\mu(f \circ T),
\end{equation}
from which follows that
\begin{equation}
(f \star g) \circ T = (f \circ T) \star' (g \circ T).
\end{equation}
If, in particular, we choose some global classical canonical coordinate system
$(x^1,\dotsc,x^d)$ on $M$ then a star-product \eqref{eq:1.5} induced by
coordinate vector fields $\partial_{x^1},\dotsc,\partial_{x^d}$ is a Moyal
product in these coordinates and all other star-products from the family
\eqref{eq:1.5} are related to the Moyal product by a classical canonical
coordinate transformation $T$. More details of the use of the star-product
\eqref{eq:1.5} in a quantization procedure the reader can find in
\cite{Blaszak:2013b}.

Recall that a coordinate system $(x^1,\dotsc,x^d)$ is classical canonical iff
\begin{equation}
\{x^\mu,x^\nu\} = \mathcal{P}^{\mu\nu},
\end{equation}
where $\mathcal{P}^{\mu\nu}$ are given by \eqref{eq:1.3}. In a complete analogy
a coordinate system $(x^1,\dotsc,x^d)$ is called quantum canonical iff
\begin{equation}
\lshad x^\mu,x^\nu \rshad = \mathcal{P}^{\mu\nu}.
\label{eq:1.7}
\end{equation}

As a motivation for further considerations let us consider a classical
Hamiltonian system described by a phase space (Poisson manifold)
$(M,\mathcal{P})$. The quantization of this system in accordance to deformation
quantization theory is performed by introducing a star-product $\star$ on $M$
and associating to each measurable quantity smooth complex-valued function
defined on $M$. Then, equivalently, the same quantization could be described by
another star-product $\star'$ equivalent with the star-product $\star$ and by
such assignment of functions to measurable quantities that to a given measurable
quantity corresponds function $S^{-1}f$ where $f$ is a function from the first
quantization scheme corresponding to the same measurable quantity, and $S$ is
the equivalence morphism between star-products $\star'$ and $\star$.

Quite often we are interested in a local description of a quantization. That is,
if $(x^1,\dotsc,x^d)$ are coordinates on a Poisson manifold $M$, defined on a
domain $\mathcal{O} \subset M$ and whose image is an open subset
$U \subset \mathbb{R}^d$, then we can write the \nbr{\star}product
in these coordinates receiving a product in the algebra
$C^\infty(U)\lshad\hbar\rshad$ denoted hereafter by $\star^{(x)}$. If, moreover,
the coordinates $(x^1,\dotsc,x^d)$ are at the same time classical and quantum
canonical, then in these coordinates the components of the Poisson tensor
$\mathcal{P}$ take the form \eqref{eq:1.3}. On $U$ we can also define a Moyal
product \eqref{eq:1.4} associated to the same Poisson tensor $\mathcal{P}$.
In what follows we will prove that in this case \nbr{\star^{(x)}}product and
the Moyal product are equivalent. Thus, locally the quantization given by
the \nbr{\star}product is equivalent with the Moyal quantization, provided that
to measurable quantities we will assign functions of the form $S^{-1}f$ as
explained above. In other words, locally a given quantization can be described
in terms of the Moyal product, however the exact form of the equivalence
morphism $S$ between the Moyal product and the \nbr{\star}product is needed.

As an example let us consider a Poisson manifold $M$ in the form of a cotangent
bundle $T^*\mathcal{Q}$ to a Riemannian manifold $\mathcal{Q}$. Assume we
perform a quantization, of a classical system described by such Poisson
manifold, by means of some \nbr{\star}product. We may want to construct an
operator representation, of the received quantum system, in the Hilbert space
$L^2(\mathcal{Q})$. That is, if we choose some coordinate system
$(q^1,\dotsc,q^N)$ on $\mathcal{Q}$, then the induced canonical coordinate
system $(q^1,\dotsc,q^N,p_1,\dotsc,p_N)$ on $T^*\mathcal{Q}$ will be quantum
canonical. We may want then to prescribe to functions $f$ on $T^*\mathcal{Q}$
written in the canonical coordinates $(q^1,\dotsc,q^N,p_1,\dotsc,p_N)$
appropriately ordered operator functions $f(\hat{q},\hat{p})$, where
$\hat{q}^i,\hat{p}_j$ are operators of position and momentum corresponding to
the coordinate system $(q^1,\dotsc,q^N,p_1,\dotsc,p_N)$.

The appropriate operator representation may be constructed using the property
that the given quantization is locally equivalent with the Moyal quantization.
Since we know that for the Moyal quantization corresponds Weyl (symmetric)
ordering of operators $\hat{q}^i,\hat{p}_j$ then to a function $f$ on
$T^*\mathcal{Q}$ written in the canonical coordinates
$(q^1,\dotsc,q^N,p_1,\dotsc,p_N)$ should correspond Weyl ordered operator
function $S^{-1}f(\hat{q},\hat{p})$, where $S$ is the equivalence morphism
between the Moyal product and the \nbr{\star}product. The formula
$S^{-1}f(\hat{q},\hat{p})$ could be viewed as a definition of a new
\nbr{S}ordering of operators $\hat{q}^i,\hat{p}_j$. Note, that for a different
coordinate system $(q'^1,\dotsc,q'^N,p'_1,\dotsc,p'_N)$ we would get different
morphism $S$ and different \nbr{S}ordering, so that the operator representation
will be consistent with the change of coordinates. Note also, that for natural
star-products on $T^*\mathcal{Q}$ introduced in Section~\ref{sec:3} the
corresponding equivalence morphism $S$ with the Moyal product will have such
property that in its \nbr{\hbar}expansion only finite number of terms will give
non-zero contribution when acting on functions polynomial in momenta. For
instance, if $f(q,p) = K^{ij}(q) p_i p_j$, then the action of the morphism $S$
given by \eqref{eq:3.37} results in the following function
\begin{equation}
S^{-1}f(q,p) = K^{ij}(q) p_i p_j
    - \frac{\hbar^2}{4} \left(K^{ij}_{\phantom{ij},k}(q) \Gamma^k_{ij}(q)
    + K^{ij}(q) \Gamma^k_{li}(q) \Gamma^l_{kj}(q) \right).
\end{equation}
More details on the presented approach to quantization and the construction of
the operator representation the reader can find in
\cite{Blaszak:2012,Blaszak:2013,Blaszak:2013b}.

The above example shows that it is important to have a systematic construction
of the equivalence morphism $S$. In the following theorem formulas for the
construction of the morphism $S$ order by order in $\hbar$ are given, together
with the proof of the existence of the equivalence morphism $S$ for every
star-product. The proof of the existence of the morphism $S$ is based on the
results of \cite{Dito:1999}.

\begin{theorem}
\label{thm:1.1}
For a Poisson manifold $(M,\mathcal{P})$ together with a star-product $\star$
defined on it and for any coordinate system $(x^1,\dotsc,x^d)$ on $M$, whose
image is an open subset $U \subset \mathbb{R}^d$, and which is at the same time
classical and quantum canonical with respect to the \nbr{\star}product, there
exists a unique series $S$ of the form
\begin{equation}
S = \id + \sum_{k=1}^\infty \hbar^k S_k,
\end{equation}
where $S_k$ are differential operators on $C^\infty(U)\lshad\hbar\rshad$,
such that
\begin{subequations}
\label{eq:1.8}
\begin{align}
S(f \star_M^{(x)} g) & = Sf \star^{(x)} Sg, \label{eq:1.8a} \\
Sx^\alpha & = x^\alpha, \label{eq:1.8b}
\end{align}
\end{subequations}
where $\star_M^{(x)}$ is a star-product which in the coordinates
$(x^1,\dotsc,x^d)$ is of the form of the Moyal product. The operators $S_k$
will satisfy the following recurrence relations for $k \ge 1$
\begin{equation}
[S_k,x^\alpha](f) = \frac{1}{2} \sum_{l=1}^k \bigl(C_l(x^\alpha,S_{k-l}(f))
    + C_l(S_{k-l}(f),x^\alpha)\bigr), \quad f \in C^\infty(M).
\label{eq:1.9}
\end{equation}
\end{theorem}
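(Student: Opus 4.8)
The plan is to work entirely in the chart, so that $\star^{(x)}$ is represented by its cochains $C_k$ and $\star_M^{(x)}$ by the Moyal cochains, and to build $S = \id + \sum_{k\ge1}\hbar^k S_k$ recursively, determining $S_k$ once $S_0=\id,\dots,S_{k-1}$ are known. The starting observation, which singles out the coordinate functions, is that because $\mathcal{P}^{\mu\nu}$ is antisymmetric and $x^\alpha$ has vanishing second derivatives, the Moyal product with a coordinate truncates at first order: its symmetric (Jordan) part collapses to ordinary multiplication, $\frac{1}{2}(f\star_M^{(x)}x^\alpha + x^\alpha\star_M^{(x)}f)=fx^\alpha$, while its antisymmetric part is the exact derivation $x^\alpha\star_M^{(x)}f - f\star_M^{(x)}x^\alpha = i\hbar\mathcal{P}^{\alpha\mu}\partial_\mu f$.

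First I would prove the necessity of \eqref{eq:1.9}, which simultaneously yields uniqueness. Assuming \eqref{eq:1.8a}, I apply it to the pairs $(f,x^\alpha)$ and $(x^\alpha,f)$, add the two identities, and use $Sx^\alpha=x^\alpha$ from \eqref{eq:1.8b} together with the Jordan collapse above; the left-hand side becomes $2S(x^\alpha f)$, and after cancelling the $C_0$ terms and collecting powers of $\hbar$ one obtains exactly \eqref{eq:1.9} with $[S_k,x^\alpha](f)=S_k(x^\alpha f)-x^\alpha S_k(f)$. Taking in addition $f=g=1$ in \eqref{eq:1.8a} and using property \itref{item:1.5}, one finds that $S1$ is idempotent for $\star^{(x)}$ and equal to $1+o(\hbar)$, whence $S1=1$, i.e.\ $S_k(1)=0$ for $k\ge1$. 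Since a differential operator is recovered from its commutators with all the $x^\alpha$ up to its order-zero (multiplication) part, and that part is pinned by $S_k(1)=0$, the pair \eqref{eq:1.9}--$S_k(1)=0$ determines each $S_k$ uniquely from $S_0,\dots,S_{k-1}$, which gives uniqueness of $S$.

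For existence I would run the standard order-by-order deformation argument and show the obstruction vanishes. Suppose $S$ has been built so that \eqref{eq:1.8a} holds modulo $\hbar^k$. The coefficient of $\hbar^k$ in $S(f\star_M^{(x)}g)-Sf\star^{(x)}Sg$ has the form $-(\delta S_k)(f,g)+\Phi_k(f,g)$, where $\delta$ is the Hochschild differential and $\Phi_k$ is a known bidifferential operator built from $S_0,\dots,S_{k-1}$ and the cochains through order $k$. Associativity of both products (property \itref{item:1.4}) forces $\Phi_k$ to be a Hochschild $2$-cocycle, so by the Hochschild--Kostant--Rosenberg description its class in $H^2$ is represented by its antisymmetrization, a bivector field $B_k$. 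The value of $B_k$ on a pair $(x^\mu,x^\nu)$ is the coefficient of $\hbar^k$ in the discrepancy between the deformed brackets $\lshad\sdot,\sdot\rshad_{\star^{(x)}}$ and $\lshad\sdot,\sdot\rshad_{\star_M^{(x)}}$ on coordinates; both brackets equal $\mathcal{P}^{\mu\nu}$ to all orders --- for $\star_M^{(x)}$ by the truncation above and for $\star^{(x)}$ by the quantum-canonical hypothesis \eqref{eq:1.7} --- so $B_k$ vanishes on every coordinate pair and hence $B_k=0$. Therefore $\Phi_k$ is exact and $\delta S_k=\Phi_k$ is solvable.

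The main obstacle is the final reconciliation: the solution of $\delta S_k=\Phi_k$ is unique only up to adding a vector field (a derivation, the kernel of $\delta$ on $1$-cochains), and I must show that this residual freedom can be fixed precisely so that the constructed $S_k$ also obeys \eqref{eq:1.9} and $S_k(1)=0$. Concretely, I would verify the integrability of \eqref{eq:1.9}, namely that the right-hand side $R^\alpha$ satisfies $[R^\alpha,x^\beta]=[R^\beta,x^\alpha]$, which is what guarantees the existence of a single differential operator with the prescribed commutators $[S_k,x^\alpha]=R^\alpha$; this symmetry is the delicate computation, and I expect to extract it from associativity \itref{item:1.4} together with the inductive hypothesis and the coordinate-bracket matching. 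Granting it, subtracting the particular solution shows that the discrepancy between $R^\alpha$ and $[S_k^{\mathrm{part}},x^\alpha]$ is multiplication by a function $\xi^\alpha$, that the $\xi^\alpha$ assemble into a vector field $X_k=\xi^\mu\partial_\mu$, and that adding $X_k$ produces the unique $S_k$ meeting all requirements. Iterating over $k$ yields $S$ and closes the induction.
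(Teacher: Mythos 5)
Your uniqueness argument is essentially the paper's: both rest on the observation that \eqref{eq:1.8a}, applied to the pairs $(x^\alpha,f)$ and $(f,x^\alpha)$ and combined with the collapse of the symmetrized Moyal product with a linear coordinate, forces the recurrence \eqref{eq:1.9}, which together with the normalization pins each $S_k$ once the lower orders are known. Where you genuinely diverge is in the existence proof. The paper runs no obstruction-theoretic induction at all: it \emph{defines} $S$ on monomials by Weyl symmetrization, $S(x^{\alpha_1}\dotsm x^{\alpha_r})=\frac{1}{r!}\sum_{\sigma}x^{\sigma(\alpha_1)}\star^{(x)}\dotsm\star^{(x)}x^{\sigma(\alpha_r)}$, checks by commuting $x^\alpha$ to the left (this is exactly where quantum canonicity enters) that this map satisfies the one-sided intertwining relations \eqref{eq:1.22} on $\Pol$, reads off \eqref{eq:1.9}, and then invokes Dito's Lemma~\ref{lem:1.1} to extend each $S_k$ from $\Pol$ to a differential operator; the homomorphism property for general smooth arguments follows because two bidifferential operators agreeing on polynomials coincide. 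Your route instead works in the Hochschild complex: the order-$k$ obstruction $\Phi_k$ is a $2$-cocycle by associativity \itref{item:1.4}, its class is represented by its antisymmetrization $B_k$, and $B_k$ is killed by evaluating on coordinate pairs via quantum canonicity. This is a legitimate and standard argument, but it imports the differentiable Hochschild--Kostant--Rosenberg (Vey) theorem for $C^\infty(U)$, a substantially heavier input than anything in the paper; what the explicit formula \eqref{eq:1.10} buys the authors is precisely that they never need to prove that a cocycle with vanishing antisymmetric part is a coboundary.

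The one real soft spot is your closing paragraph, which names the ``integrability'' of \eqref{eq:1.9}, i.e.\ $[R^\alpha,x^\beta]=[R^\beta,x^\alpha]$, as the main remaining obstacle and leaves it as an expectation. As written this is a gap, but it sits in a detour you do not need. Once you have solved $\delta S_k=\Phi_k$ and used the residual freedom (a vector field $X_k$) to enforce $S_k(x^\alpha)=0$ --- which fixes $X_k$ uniquely, since the $X_k(x^\alpha)$ are exactly its components --- conditions \eqref{eq:1.8a} and \eqref{eq:1.8b} hold through order $k$, and \eqref{eq:1.9} at order $k$ then follows automatically by your own necessity computation; no separate compatibility check on the right-hand sides is required. (Likewise $S_k(1)=0$ comes for free from $\delta S_k(1,1)=S_k(1)$, $\Phi_k(1,1)=0$, property \itref{item:1.5} and the inductive hypothesis.) If you instead wanted to construct $S_k$ directly from \eqref{eq:1.9} without the cohomological step, the compatibility you worry about is exactly what the paper's symmetrization ansatz supplies for free; so either finish the cohomological induction as above, or adopt the explicit formula \eqref{eq:1.10} and Lemma~\ref{lem:1.1} as the paper does.
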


\begin{proof}
We will show that the searched morphism $S$ on monomials takes the form
\begin{equation}
S(x^{\alpha_1} \dotsm x^{\alpha_r}) =
    \frac{1}{r!} \sum_{\sigma \in \mathfrak{S}_r}
    x^{\sigma(\alpha_1)} \star^{(x)} \dotsm \star^{(x)} x^{\sigma(\alpha_r)},
\label{eq:1.10}
\end{equation}
where $\mathfrak{S}_r$ is the group of all permutations of the set
$\{1,2,\dotsc,r\}$. The morphism $S$ can be then linearly extended to the space
of all polynomials $\Pol$. We will prove that $S$ can be uniquely extended to
the space $C^\infty(U)\lshad\hbar\rshad$. First, we will show that from
\eqref{eq:1.10} using the quantum canonicity of the coordinate system we get the
following relations:
\begin{subequations}
\label{eq:1.22}
\begin{align}
S(x^\alpha \star_M^{(x)} f) & = x^\alpha \star^{(x)} S(f), \label{eq:1.22a} \\
S(f \star_M^{(x)} x^\alpha) & = S(f) \star^{(x)} x^\alpha, \label{eq:1.22b}
\end{align}
\end{subequations}
for $f \in \Pol$. Indeed, we calculate that
\begin{equation}
x^\alpha \star_M^{(x)} f = x^\alpha f + \frac{1}{2}i\hbar
    \mathcal{P}^{\alpha\beta} \partial_{x^\beta}f,
\end{equation}
from which we get that
\begin{equation}
x^\alpha \star_M^{(x)} x^{\alpha_1} \dotsm x^{\alpha_r} =
    x^\alpha x^{\alpha_1} \dotsm x^{\alpha_r} + \frac{1}{2}i\hbar \sum_{s=1}^r
    \mathcal{P}^{\alpha\alpha_s} x^{\alpha_1} \dotsm x^{\alpha_{s-1}}
    x^{\alpha_{s+1}} \dotsm x^{\alpha_r}.
\label{eq:1.20}
\end{equation}
On the other hand from \eqref{eq:1.10} we have that
\begin{equation}
S(x^\alpha x^{\alpha_1} \dotsm x^{\alpha_r}) = \frac{1}{(r+1)!}
    \sum_{\sigma \in \mathfrak{S}_{r+1}} x^{\sigma(\alpha)} \star^{(x)}
    x^{\sigma(\alpha_1)} \star^{(x)} \dotsm \star^{(x)} x^{\sigma(\alpha_r)}.
\label{eq:1.19}
\end{equation}
After the commutation of $x^\alpha$ to the left in each term of the right hand
side of \eqref{eq:1.19}, and by using the quantum canonicity condition
\eqref{eq:1.7} we get
\begin{align}
S(x^\alpha x^{\alpha_1} \dotsm x^{\alpha_r}) & = \frac{1}{r!}
    \sum_{\sigma \in \mathfrak{S}_r} x^\alpha \star^{(x)} x^{\sigma(\alpha_1)}
    \star^{(x)} \dotsm \star^{(x)} x^{\sigma(\alpha_r)} \nonumber \\
& \quad {} - \frac{1}{2}i\hbar \frac{1}{(r-1)!} \sum_{\sigma \in \mathfrak{S}_r}
    \mathcal{P}^{\alpha \sigma(\alpha_1)} x^{\sigma(\alpha_2)} \star^{(x)}
    \dotsm \star^{(x)} x^{\sigma(\alpha_r)} \nonumber \\
& = x^\alpha \star^{(x)} S(x^{\alpha_1} \dotsm x^{\alpha_r})
    - \frac{1}{2}i\hbar \sum_{s=1}^r
    \mathcal{P}^{\alpha\alpha_s} S(x^{\alpha_1} \dotsm x^{\alpha_{s-1}}
    x^{\alpha_{s+1}} \dotsm x^{\alpha_r}).
\label{eq:1.21}
\end{align}
Combining \eqref{eq:1.20} and \eqref{eq:1.21} we receive
\begin{equation}
S(x^\alpha \star_M^{(x)} x^{\alpha_1} \dotsm x^{\alpha_r}) =
    x^\alpha \star^{(x)} S(x^{\alpha_1} \dotsm x^{\alpha_r}),
\end{equation}
which shows \eqref{eq:1.22a}. Equation \eqref{eq:1.22b} can be proved
analogically.

Adding \eqref{eq:1.22a} to \eqref{eq:1.22b} we get the following recurrence
relations on $S$:
\begin{equation}
S(x^\alpha f) = \frac{1}{2}\left(x^\alpha \star^{(x)} S(f)
    + S(f) \star^{(x)} x^\alpha\right), \quad f \in \Pol.
\label{eq:1.11}
\end{equation}
After expanding the \nbr{\star^{(x)}}product and morphism $S$ in the formula
\eqref{eq:1.11} we get
\begin{align}
\sum_{k=0}^\infty \hbar^k S_k(x^\alpha f) & = \frac{1}{2} \sum_{l=0}^\infty
    \sum_{n=0}^\infty \hbar^{l+n} \bigl(C_l(x^\alpha,S_n(f))
    + C_l(S_n(f),x^\alpha)\bigr) \nonumber \\
& = \sum_{k=0}^\infty \hbar^k \frac{1}{2} \sum_{l=0}^k
    \bigl(C_l(x^\alpha,S_{k-l}(f)) + C_l(S_{k-l}(f),x^\alpha)\bigr).
\label{eq:1.15}
\end{align}
From \eqref{eq:1.15} we get the following recurrence relations on $S_k$ for
$k \ge 0$:
\begin{equation}
S_k(x^\alpha f) = \frac{1}{2} \sum_{l=0}^k
    \bigl(C_l(x^\alpha,S_{k-l}(f)) + C_l(S_{k-l}(f),x^\alpha)\bigr),
\end{equation}
which can be rewritten in the form
\begin{equation}
[S_k,x^\alpha](f) = \frac{1}{2} \sum_{l=1}^k \bigl(C_l(x^\alpha,S_{k-l}(f))
    + C_l(S_{k-l}(f),x^\alpha)\bigr), \quad k \ge 1.
\label{eq:1.16}
\end{equation}
Before going further, we need a lemma.

\begin{lemma}[\citet{Dito:1999}]
\label{lem:1.1}
Let $\psi \colon \Pol \to C^\infty(U)$, where $U$ is an open subset of
$\mathbb{R}^d$, be an \nbr{\mathbb{C}}linear map such that $\psi(1) =
\psi(x^\alpha) = 0$, and let $\phi \colon C^\infty(U) \times C^\infty(U) \to
C^\infty(U)$ be a bidifferential operator vanishing on constants. If $\psi$
satisfies
\begin{equation}
[\psi,x^\alpha](f) = \phi(x^\alpha,f), \quad f \in \Pol,
\label{eq:1.17}
\end{equation}
then there exists exactly one differential operator $\eta$ on $U$ such that
$\psi = \eta|_\Pol$.
\end{lemma}

The term of order 1 in \eqref{eq:1.16} yields $[S_1,x^\alpha](f) =
\frac{1}{2}\bigl(C_1(x^\alpha,f) + C_1(f,x^\alpha)\bigr)$. The right hand side
of this equality is a bidifferential operator acting on $x^\alpha,f$ and
vanishing on constants. Hence, by virtue of Lemma~\ref{lem:1.1}, $S_1$ uniquely
extends to a differential operator. Now, through similar arguments and by
induction on $k$ each $S_k$ uniquely extends to a differential operator.
Clearly, the map $S$ can be naturally extended to a
\nbr{\mathbb{C}\lshad\hbar\rshad}linear map on $C^\infty(U)\lshad\hbar\rshad$.

Each monomial $x^{\alpha_1} \dotsm x^{\alpha_r}$ can be written as a
\nbr{\star_M^{(x)}}polynomial:
\begin{equation}
x^{\alpha_1} \dotsm x^{\alpha_r} = \frac{1}{r!} \sum_{\sigma \in \mathfrak{S}_r}
    x^{\sigma(\alpha_1)} \star_M^{(x)} \dotsm \star_M^{(x)}
    x^{\sigma(\alpha_r)}.
\end{equation}
Thus, using \eqref{eq:1.22a} we get for $f \in \Pol$
\begin{align}
S\bigl((x^{\alpha_1} \dotsm x^{\alpha_r}) \star_M^{(x)} f\bigr) & =
    \frac{1}{r!} \sum_{\sigma \in \mathfrak{S}_r} S(x^{\sigma(\alpha_1)}
    \star_M^{(x)} \dotsm \star_M^{(x)} x^{\sigma(\alpha_r)} \star_M^{(x)} f)
    \nonumber \\
& = \frac{1}{r!} \sum_{\sigma \in \mathfrak{S}_r} x^{\sigma(\alpha_1)}
    \star^{(x)} S(x^{\sigma(\alpha_2)} \star_M^{(x)}
    \dotsm \star_M^{(x)} x^{\sigma(\alpha_r)} \star_M^{(x)} f) \nonumber \\
& = \frac{1}{r!} \sum_{\sigma \in \mathfrak{S}_r} x^{\sigma(\alpha_1)}
    \star^{(x)} \dotsm \star^{(x)} x^{\sigma(\alpha_r)} \star^{(x)} S(f)
= S(x^{\alpha_1} \dotsm x^{\alpha_r}) \star^{(x)} S(f).
\end{align}
Hence, for polynomials $f,g$ we have
\begin{equation}
S(f \star_M^{(x)} g) = S(f) \star^{(x)} S(g).
\end{equation}
On the other hand $S$ can be used to define a star-product $\star'$ equivalent
to the \nbr{\star^{(x)}}product through the formula
\begin{equation}
S(f \star' g) = S(f) \star^{(x)} S(g).
\end{equation}
Both star-products $\star_M^{(x)}$ and $\star'$ agree on polynomials, therefore
must be equal, since two bidifferential operators equal on $\Pol \times \Pol$
are equal. This shows the existence of the searched morphism $S$. The uniqueness
follows from the fact that any morphism $S$ satisfying \eqref{eq:1.8} also
satisfies \eqref{eq:1.22}, and using quantum canonicity of the coordinate system
from this we deduce that $S$ on monomials takes the form \eqref{eq:1.10}.
\end{proof}

If the \nbr{\star}product besides the conditions
\itref{item:1.1}--\itref{item:1.5} satisfies also the parity condition
\begin{equation}
C_k(f,g) = (-1)^k C_k(g,f), \quad f,g \in C^\infty(M),
\end{equation}
then relations \eqref{eq:1.9} take the form
\begin{subequations}
\label{eq:1.23}
\begin{align}
[S_1,x^\alpha](f) & = 0,
\label{eq:1.23a} \\
[S_{2k+1},x^\alpha](f) & = \sum_{l=1}^k C_{2l}(x^\alpha,S_{2(k-l)+1}(f)),
\label{eq:1.23b} \\
[S_{2k},x^\alpha](f) & = \sum_{l=1}^k C_{2l}(x^\alpha,S_{2(k-l)}(f)),
\label{eq:1.23c}
\end{align}
\end{subequations}
for $k \ge 1$. From \eqref{eq:1.23a} follows that $S_1$ is an operator of
multiplication by function. By virtue of \eqref{eq:1.8b} this function has to be
equal 0. In the same manner from \eqref{eq:1.23b} $S_{2k+1} = 0$ for $k \ge 1$.
Thus, in this special case only terms of even order in the expansion of $S$ are
non-zero and they are given by \eqref{eq:1.23c}.

Note, that if $(x^1,\dotsc,x^d)$ is a purely quantum canonical coordinate
system, i.e.\ it is not at the same time classical canonical, then it must
depend on $\hbar$ and, in fact, will be a deformation of some classical
canonical coordinate system. The components $\mathcal{P}^{\mu\nu}$ of the
Poisson tensor $\mathcal{P}$ for such purely quantum canonical coordinate system
will also depend on $\hbar$ and can be expanded in the following series
\begin{equation}
\mathcal{P}^{\mu\nu} = \mathcal{P}^{\mu\nu}_0
    + \hbar \mathcal{P}^{\mu\nu}_1  + \hbar^2 \mathcal{P}^{\mu\nu}_2
    + o(\hbar^3),
\label{eq:1.24}
\end{equation}
where $\mathcal{P}^{\mu\nu}_0$ are of the form \eqref{eq:1.3}.
In consequence, the bidifferential operators $C_k$ from the expansion
\eqref{eq:1.1} of the \nbr{\star}product written in the coordinates
$(x^1,\dotsc,x^d)$ will depend on $\hbar$. Expanding $C_k$ in the power
series of $\hbar$ allows to write the \nbr{\star^{(x)}}product in the form
\begin{equation}
f \star^{(x)} g = \sum_{k=0}^\infty \hbar^k C'_k(f,g),
\end{equation}
where $C'_k$ are new bidifferential operators which are independent on $\hbar$,
and satisfy conditions \itref{item:1.1}--\itref{item:1.5}, where in condition
\itref{item:1.3} the Poisson bracket, in accordance to \eqref{eq:1.24}, is
associated to the Poisson tensor $\mathcal{P}_0$. As a result the
\nbr{\star^{(x)}}product can be considered as a coordinate representation, with
respect to the coordinate system $(x^1,\dotsc,x^d)$, of some star-product on a
Poisson manifold $(\mathcal{O},\mathcal{P}_0)$. The coordinates
$(x^1,\dotsc,x^d)$ are then classical and quantum canonical. Thus,
Theorem~\ref{thm:1.1} is also valid for a purely quantum canonical coordinate
system $(x^1,\dotsc,x^d)$. However, the Moyal product $\star_M^{(x)}$ will no
longer be associated to the Poisson tensor $\mathcal{P}$, but to some other
Poisson tensor.

\section{Form of the morphism $S$}
\label{sec:3}
In this section we will use \thmref{thm:1.1} to derive the form of the
morphism $S$. It is straightforward to calculate that the solution of
\eqref{eq:1.9}, in a general case, is of the form
\begin{equation}
S_k = \sum_{n=1}^\infty \frac{1}{n!}
    [x^{\alpha_1},\dotsc,[x^{\alpha_{n-1}},F_k^{\alpha_n}]]
    \partial_{x^{\alpha_1}} \dotsm \partial_{x^{\alpha_n}},
\label{eq:3.1}
\end{equation}
where $F_k^\alpha(f) = \frac{1}{2} \sum_{l=1}^k \bigl(C_l(x^\alpha,S_{k-l}(f))
+ C_l(S_{k-l}(f),x^\alpha)\bigr)$. Indeed,
\begin{align}
[S_k,x^\alpha] & = -\sum_{n=1}^\infty \frac{1}{n!}
    [x^\alpha,[x^{\beta_1},\dotsc,[x^{\beta_{n-1}},F_k^{\beta_n}]]]
    \partial_{\beta_1} \dotsm \partial_{\beta_n} \nonumber \\
& \quad {} + \sum_{n=1}^\infty\frac{1}{(n-1)!}
    [x^{\beta_1},\dotsc,[x^{\beta_{n-1}},F_k^\alpha]]
    \partial_{\beta_1} \dotsm \partial_{\beta_{n-1}} \nonumber \\
& = -\sum_{n=1}^\infty \frac{1}{n!}[x^{\beta_1},\dotsc,[x^{\beta_n},F_k^\alpha]]
    \partial_{\beta_1} \dotsm \partial_{\beta_n} \nonumber \\
& \quad {} + \sum_{n=0}^\infty \frac{1}{n!}
    [x^{\beta_1},\dotsc,[x^{\beta_n},F_k^\alpha]]
    \partial_{\beta_1} \dotsm \partial_{\beta_n} = F_k^\alpha.
\end{align}
Note, that when $C_k$ are bidifferential operators of finite order then the sum
in \eqref{eq:3.1} will be finite.

We will now calculate the formulas for the morphism $S$ for particular classes
of star-products.

\subsection{Case of a canonical star-product on $T^*\mathcal{Q}$ with a flat
base manifold $\mathcal{Q}$}
Let $\mathcal{Q}$ be an \nbr{n}dimensional manifold endowed with a flat
torsionless linear connection $\nabla$. Let us consider the cotangent bundle to
$\mathcal{Q}$, $M = T^*\mathcal{Q}$. On $M$ there exists a natural symplectic
form $\omega$, which induces a natural Poisson tensor $\mathcal{P} =
\omega^{-1}$. The linear connection $\nabla$ induces a flat torsionless
symplectic connection $\tilde{\nabla}$ on $M$, which Christoffel symbols in
induced canonical coordinates $(x^1,\dotsc,x^{2n}) =
(q^1,\dotsc,q^n,p_1,\dotsc,p_n)$ are given by the formula \cite{Plebanski:2001}
\begin{gather}
\tilde{\Gamma}^i_{jk} = \Gamma^i_{jk}, \quad
\tilde{\Gamma}^{\bar{i}}_{\bar{j} k} = -\Gamma^j_{ik}, \quad
\tilde{\Gamma}^{\bar{i}}_{j \bar{k}} = -\Gamma^k_{ji}, \quad
\tilde{\Gamma}^{\bar{i}}_{jk} = p_l(\Gamma^r_{jk} \Gamma^l_{ri}
    + \Gamma^r_{ik} \Gamma^l_{rj} - \Gamma^l_{ij,k}),
\label{eq:3.23}
\end{gather}
with the remaining components equal zero, where $\bar{i} = n + i$ and $,k$
denotes the partial derivative with respect to $q^k$. On
$(M,\omega,\tilde{\nabla})$ there exists a natural star-product given by the
formula \cite{Bayen:1978a,Bayen:1978b}
\begin{equation}
f \star g = \sum_{k=0}^\infty \frac{1}{k!} \left(\frac{i\hbar}{2}\right)^k
    \mathcal{P}^{\mu_1 \nu_1} \dotsm \mathcal{P}^{\mu_k \nu_k}
    (\underbrace{\tilde{\nabla} \dotsm \tilde{\nabla}}_k f)_{\mu_1\dotsc \mu_k}
    (\underbrace{\tilde{\nabla} \dotsm \tilde{\nabla}}_k g)_{\nu_1\dotsc \nu_k}.
\label{eq:3.36}
\end{equation}

If $(q^1,\dotsc,q^n)$ are coordinates on $\mathcal{Q}$ whose image is an open
subset $U \subset \mathbb{R}^n$ and $(x^1,\dotsc,x^{2n}) =
(q^1,\dotsc,q^n,p_1,\dotsc,p_n)$ are induced classical canonical coordinates on
$T^*\mathcal{Q}$ with image $T^*U = U \times \mathbb{R}^n$, then these
coordinates are quantum canonical with respect to the \nbr{\star}product. The
\nbr{\star}product can be written in these coordinates resulting in a
star-product in the algebra $C^\infty(T^*U)\lshad\hbar\rshad$. In this algebra
the Moyal product can also be defined. In accordance to \secref{sec:1} these two
star-products are equivalent and in what follows the equivalence morphism $S$ to
the fourth order in $\hbar$ will be derived.

It can be calculated that the operators $C_k(x^\alpha,\sdot)$ take the form
\begin{subequations}
\label{eq:3.24}
\begin{align}
C_k(q^j,\sdot) & = \frac{1}{k!} \left(\frac{i}{2}\right)^k
    (\underbrace{\nabla \dotsm \nabla}_{k} q^j)_{j_1 \dotsc j_k}
    \partial_{p_{j_1}} \dotsm \partial_{p_{j_k}}, \\
C_{k+1}(p_j,\sdot) & = \frac{1}{(k+1)!} \left(\frac{i}{2}\right)^{k+1}
    \left( f^r_{j j_1 \dotsc j_{k+1}}
    - (k + 1) f^l_{j (j_1 \dotsc j_k} \Gamma^r_{j_{k+1}) l} \right)
    p_r \partial_{p_{j_1}} \dotsm \partial_{p_{j_{k+1}}} \nonumber \\
& \quad {} - \frac{1}{k!} \left(\frac{i}{2}\right)^{k+1} f^l_{j j_1 \dotsc j_k}
    \partial_{q^l} \partial_{p_{j_1}} \dotsm \partial_{p_{j_k}}
    - \frac{1}{(k - 1)!} \left(\frac{i}{2}\right)^{k+1}
    f^r_{j l (j_1 \dotsc j_{k-1}} \Gamma^l_{j_k) r}
    \partial_{p_{j_1}} \dotsm \partial_{p_{j_k}},
\end{align}
\end{subequations}
where functions $f^l_{j j_1 \dotsc j_k}$ are given recursively by
\begin{subequations}
\begin{align}
f^l_{j j_1 \dotsc j_{k+1}} & = f^l_{j (j_1 \dotsc j_k,j_{k+1})}
    + f^r_{j (j_1 \dotsc j_k} \Gamma^l_{j_{k+1}) r}
    - k f^l_{j r (j_1 \dotsc j_{k-1}} \Gamma^r_{j_k j_{k+1})}, \\
f^l_{j j_1} & = \Gamma^l_{j j_1}.
\end{align}
\end{subequations}
We are using here the following notation: round brackets $()$ enclosing a group
of indices are to be understood as a symmetrization with respect to this group
of indices. Indeed, using \eqref{eq:3.23} one receives that
\begin{subequations}
\label{eq:3.25}
\begin{align}
(\underbrace{\tilde{\nabla} \dotsm \tilde{\nabla}}_k q^j)_{j_1 \dotsc j_k} & =
    (\underbrace{\nabla \dotsm \nabla}_k q^j)_{j_1 \dotsc j_k}, \\
(\underbrace{\tilde{\nabla} \dotsm \tilde{\nabla}}_{k+1} p_j)_{j_1 \dotsc
    j_{k+1}} & = \left(f^r_{j j_1 \dotsc j_{k+1}}
    - (k + 1) f^l_{j (j_1 \dotsc j_k} \Gamma^r_{j_{k+1}) l} \right) p_r, \\
(\underbrace{\tilde{\nabla} \dotsm \tilde{\nabla}}_{k+1} p_j)_{\bar{l} j_1
    \dotsc j_k} & = f^l_{j j_1 \dotsc j_k},
\end{align}
\end{subequations}
where remaining terms are equal zero, and
\begin{subequations}
\label{eq:3.26}
\begin{align}
(\underbrace{\tilde{\nabla} \dotsm \tilde{\nabla}}_k g)_{\bar{j}_1 \dotsc
    \bar{j}_k} & = \partial_{p_{j_1}} \dotsm \partial_{p_{j_k}} g, \\
(\underbrace{\tilde{\nabla} \dotsm \tilde{\nabla}}_{k+1} g)_{l \bar{j}_1 \dotsc
    \bar{j}_k} & = \partial_{q^l} \partial_{p_{j_1}} \dotsm \partial_{p_{j_k}} g
    + \Gamma^{j_1}_{lj} \partial_{p_j} \dotsm \partial_{p_{j_k}} g + \dotsb
    + \Gamma^{j_k}_{lj} \partial_{p_{j_1}} \dotsm \partial_{p_j} g.
\end{align}
\end{subequations}
From \eqref{eq:3.25} and \eqref{eq:3.26} one receives \eqref{eq:3.24}.

Using a computer algebra program one can calculate that the second and fourth
order terms in the expansion of the morphism $S$ with respect to $\hbar$ take
the form
\begin{subequations}
\label{eq:3.37}
\begin{align}
S_2 & = \frac{1}{8}\Gamma^i_{jk} \partial_{q^i}\partial_{p_j}\partial_{p_k}
    + \frac{1}{8}\Gamma^i_{lj}\Gamma^l_{ik} \partial_{p_j}\partial_{p_k}
    + \frac{1}{24} \left( 2\Gamma^i_{nl}\Gamma^n_{jk}
    - \Gamma^i_{jk,l} \right) p_i \partial_{p_j}\partial_{p_k}\partial_{p_l}, \\
S_4 & = S_{j_1 j_2 j_3 j_4} \partial_{p_{j_1}} \partial_{p_{j_2}}
    \partial_{p_{j_3}} \partial_{p_{j_4}}
    + S^i_{j_1 j_2 j_3 j_4} \partial_{q^i} \partial_{p_{j_1}}
    \partial_{p_{j_2}} \partial_{p_{j_3}} \partial_{p_{j_4}} \nonumber \\
& \quad {} + S^{i_1 i_2}_{j_1 j_2 j_3 j_4} \partial_{q^{i_1}}
    \partial_{q^{i_2}} \partial_{p_{j_1}} \partial_{p_{j_2}} \partial_{p_{j_3}}
    \partial_{p_{j_4}}
    + S^r_{j_1 j_2 j_3 j_4 j_5} p_r \partial_{p_{j_1}}\partial_{p_{j_2}}
    \partial_{p_{j_3}} \partial_{p_{j_4}} \partial_{p_{j_5}} \nonumber \\
& \quad {} + S^{ri}_{j_1 j_2 j_3 j_4 j_5} p_r \partial_{q^i}
    \partial_{p_{j_1}} \partial_{p_{j_2}} \partial_{p_{j_3}} \partial_{p_{j_4}}
    \partial_{p_{j_5}}
    + S^{rs}_{j_1 j_2 j_3 j_4 j_5 j_6} p_r p_s \partial_{p_{j_1}}
    \partial_{p_{j_2}} \partial_{p_{j_3}} \partial_{p_{j_4}} \partial_{p_{j_5}}
    \partial_{p_{j_6}},
\end{align}
\end{subequations}
where
\begin{subequations}
\begin{align}
S_{j_1 j_2 j_3 j_4} & = \frac{1}{384} \Bigl(
    - 4\Gamma^k_{l (j_1}\Gamma^l_{j_2 j_3,j_4) k}
    - 4\Gamma^k_{ln}\Gamma^l_{k (j_1}\Gamma^n_{j_2 j_3,j_4)}
    + 8\Gamma^k_{n (j_1}\Gamma^l_{|k| j_2}\Gamma^n_{j_3 j_4),l}
    + 8\Gamma^n_{kl,(j_1}\Gamma^k_{j_2 j_3}\Gamma^l_{j_4)n} \nonumber \\
& \quad {} + 8\Gamma^k_{ln}\Gamma^l_{m (j_1}\Gamma^n_{j_2 j_3}\Gamma^m_{j_4)k}
    + 3\Gamma^k_{l (j_1}\Gamma^l_{j_2 |k}\Gamma^n_{m| j_3}\Gamma^m_{j_4)n}
    - 6\Gamma^k_{m (j_1}\Gamma^l_{j_2 |k}\Gamma^n_{l| j_3}\Gamma^m_{j_4)n}
    \Bigr), \displaybreak[0] \\
S^i_{j_1 j_2 j_3 j_4} & = \frac{1}{384} \Bigl(
    - \Gamma^i_{(j_1 j_2,j_3 j_4)} + 5\Gamma^k_{(j_1 j_2}\Gamma^i_{j_3 j_4),k}
    - 2\Gamma^i_{k (j_1}\Gamma^k_{j_2 j_3,j_4)}
    + 6\Gamma^k_{l (j_1}\Gamma^i_{j_2 j_3}\Gamma^l_{j_4)k}
    + 2\Gamma^k_{(j_1 j_2}\Gamma^l_{j_3 j_4)}\Gamma^i_{kl}
    \Bigr), \displaybreak[0] \\
S^{i_1 i_2}_{j_1 j_2 j_3 j_4} & = \frac{1}{128}
    \Gamma^{i_1}_{(j_1 j_2}\Gamma^{i_2}_{j_3 j_4)}, \displaybreak[0] \\
S^r_{j_1 j_2 j_3 j_4 j_5} & = \frac{1}{1920} \Bigl(
    \Gamma^r_{(j_1 j_2,j_3 j_4 j_5)}
    - 9\Gamma^k_{(j_1 j_2}\Gamma^r_{j_3 j_4,j_5) k}
    - 2\Gamma^r_{k (j_1}\Gamma^k_{j_2 j_3,j_4 j_5)}
    + 3\Gamma^r_{k (j_1,j_2}\Gamma^k_{j_3 j_4,j_5)} \nonumber \\
& \quad {} - 7\Gamma^k_{l (j_1}\Gamma^l_{j_2 j_3,j_4}\Gamma^r_{j_5)k}
    - 9\Gamma^r_{kl}\Gamma^k_{(j_1 j_2}\Gamma^l_{j_3 j_4,j_5)}
    - 10\Gamma^k_{l (j_1}\Gamma^r_{j_2 j_3,j_4}\Gamma^l_{j_5)k}
    + 16\Gamma^r_{l (j_1}\Gamma^k_{j_2 j_3}\Gamma^l_{j_4 j_5),k}
    \nonumber \\
& \quad {} + 6\Gamma^r_{kl,(j_1}\Gamma^k_{j_2 j_3}\Gamma^l_{j_4 j_5)}
    + 20\Gamma^r_{k (j_1}\Gamma^k_{j_2 j_3}\Gamma^l_{|n| j_4}\Gamma^n_{j_5)l}
    + 16\Gamma^r_{n (j_1}\Gamma^k_{j_2 j_3}\Gamma^l_{j_4 j_5)}\Gamma^n_{kl}
    \Bigr), \displaybreak[0] \\
S^{ri}_{j_1 j_2 j_3 j_4 j_5} & = \frac{1}{192} \Bigl(
    - \Gamma^i_{(j_1 j_2}\Gamma^r_{j_3 j_4,j_5)}
    + 2\Gamma^r_{k (j_1}\Gamma^k_{j_2 j_3}\Gamma^i_{j_4 j_5)}
    \Bigr), \displaybreak[0] \\
S^{rs}_{j_1 j_2 j_3 j_4 j_5 j_6} & = \frac{1}{1152} \Bigl(
    \Gamma^r_{(j_1 j_2,j_3}\Gamma^s_{j_4 j_5,j_6)}
    - 4\Gamma^r_{k (j_1}\Gamma^k_{j_2 j_3}\Gamma^s_{j_4 j_5,j_6)}
    + 4\Gamma^r_{k (j_1}\Gamma^k_{j_2 j_3}\Gamma^l_{j_4 j_5}\Gamma^s_{j_6)l}
    \Bigr),
\end{align}
\end{subequations}
and, as before, round brackets enclosing a group of indices denote a
symmetrization with respect to this group of indices. The fixed indices (those
not used in the symmetrization) are distinguished by vertical lines.

Let us notice that the $S_2$ term is necessary for quantizations of Hamiltonians
which are quadratic and cubic in momenta \cite{Blaszak:2013b} while $S_4$ term
is important for quantizations of Hamiltonians which are of forth and fifth
order in momenta, respectively.

\subsection{Case of a star-product on a general symplectic manifold}
Let $(M,\omega,\tilde{\nabla})$ be a symplectic manifold endowed with a
symplectic torsionless linear connection $\tilde{\nabla}$. Consider on such
manifold a star-product which to the second order in $\hbar$ is of the form
\begin{multline}
f \star g = fg + \frac{i\hbar}{2} \mathcal{P}^{\mu\nu} (\tilde{\nabla}_\mu f)
    (\tilde{\nabla}_\nu g)
    + \frac{1}{2} \left(\frac{i\hbar}{2}\right)^2 \mathcal{P}^{\mu_1 \nu_1}
    \mathcal{P}^{\mu_2 \nu_2} \left(
    (\tilde{\nabla}\tilde{\nabla}f)_{\mu_1 \mu_2}
    (\tilde{\nabla}\tilde{\nabla}g)_{\nu_1 \nu_2} - a\tilde{R}_{\mu_1 \mu_2}
    (\tilde{\nabla}_{\nu_1} f)(\tilde{\nabla}_{\nu_2} g) \right) \\
    + o(\hbar^3),
\label{eq:3.31}
\end{multline}
where $\mathcal{P} = \omega^{-1}$, $a \in \mathbb{R}$ and $\tilde{R}_{\mu \nu}$
is a Ricci curvature tensor. An example of such product, in the case $a = 0$, is
a Fedosov star-product \cite{Fedosov:1994}. It can be checked that the
star-product \eqref{eq:3.31} is indeed associative up to the second order in
$\hbar$. Note, that in the case of a flat connection $\tilde{\nabla}$ this
star-product takes the form \eqref{eq:3.36}. Observe, moreover, that although
in the simplest case $a = 0$ the expansion of the star-products \eqref{eq:3.36}
and \eqref{eq:3.31} coincide up to the second order in $\hbar$, this will not
be the case for higher order terms as the star-product in the form
\eqref{eq:3.36} for a non-flat connection is not associative.

The second order term in the expansion of the morphism $S$ with respect to
$\hbar$ takes the form
\begin{equation}
S_2 = -\frac{1}{24} \tilde{\Gamma}_{\alpha \beta \gamma} \partial^\alpha
    \partial^\beta \partial^\gamma
    + \frac{1}{16} (\tilde{\Gamma}^\mu_{\nu\alpha}\tilde{\Gamma}^\nu_{\mu\beta}
    + a\tilde{R}_{\alpha\beta}) \partial^\alpha \partial^\beta,
\label{eq:3.33}
\end{equation}
where $\tilde{\Gamma}_{\alpha \beta \gamma} = \omega_{\alpha \delta}
\tilde{\Gamma}^\delta_{\beta \gamma}$ and $\partial^\alpha =
\mathcal{P}^{\alpha \beta} \partial_\beta$. To prove \eqref{eq:3.33} first note
that the condition that $\tilde{\nabla}$ has vanishing torsion can be restated
as
\begin{equation}
\tilde{\Gamma}^\alpha_{\beta \gamma} = \tilde{\Gamma}^\alpha_{\gamma \beta},
\label{eq:3.34}
\end{equation}
and the condition that $\tilde{\nabla}$ is symplectic ($\omega_{\mu \nu;\alpha}
= 0$, $\mathcal{P}^{\mu \nu}_{\phantom{\mu \nu};\alpha} = 0$) in canonical
coordinates can be restated as
\begin{subequations}
\label{eq:3.35}
\begin{align}
\mathcal{P}^{\delta \beta} \tilde{\Gamma}^\alpha_{\beta \gamma} & =
    \mathcal{P}^{\alpha \beta} \tilde{\Gamma}^\delta_{\beta \gamma},
\label{eq:3.35a} \\
\omega_{\delta \alpha} \tilde{\Gamma}^\alpha_{\beta \gamma} & =
    \omega_{\beta \alpha} \tilde{\Gamma}^\alpha_{\delta \gamma}.
\label{eq:3.35b}
\end{align}
\end{subequations}
From conditions \eqref{eq:3.34} and \eqref{eq:3.35b} we get that
$\tilde{\nabla}$ is symplectic and torsionless iff
$\tilde{\Gamma}_{\alpha \beta \gamma}$ is symmetric with respect to indices
$\alpha,\beta,\gamma$ \cite{Plebanski:2001}. Now, from \eqref{eq:3.31} and
\eqref{eq:3.35a} we get that
\begin{align}
C_2(x^\alpha,\sdot) & = -\frac{1}{8}\mathcal{P}^{\mu_1 \nu_1}
    \mathcal{P}^{\mu_2 \nu_2} \left(
    (\tilde{\nabla}\tilde{\nabla}x^\alpha)_{\mu_1 \mu_2}
    (\tilde{\nabla}\tilde{\nabla}(\sdot))_{\nu_1 \nu_2}
    - a\tilde{R}_{\mu_1 \mu_2}
    (\tilde{\nabla}_{\nu_1} x^\alpha)\tilde{\nabla}_{\nu_2} \right) \nonumber \\
& = \frac{1}{8}\mathcal{P}^{\mu_1 \nu_1} \mathcal{P}^{\mu_2 \nu_2} \left(
    \tilde{\Gamma}^\alpha_{\mu_1 \mu_2}(\partial_{\nu_1} \partial_{\nu_2}
    - \tilde{\Gamma}^\beta_{\nu_1 \nu_2} \partial_\beta)
    - a\tilde{R}_{\mu_1 \mu_2} \delta^\alpha_{\nu_1} \partial_{\nu_2} \right)
    \nonumber \\
& = \frac{1}{8} \tilde{\Gamma}^\alpha_{\mu_1 \mu_2} \partial^{\mu_1}
    \partial^{\mu_2}
    + \frac{1}{8}\mathcal{P}^{\mu_1 \alpha} \tilde{\Gamma}^{\nu_1}_{\mu_1 \mu_2}
    \tilde{\Gamma}^{\mu_2}_{\nu_1 \nu_2} \partial^{\nu_2}
    + \frac{1}{8}a\mathcal{P}^{\mu_1 \alpha} \tilde{R}_{\mu_1 \mu_2}
    \partial^{\mu_2}.
\end{align}
On the other hand
\begin{align}
[S_2,x^\alpha] & = -\frac{1}{24} \mathcal{P}^{\delta \alpha}
    \tilde{\Gamma}_{\delta\beta\gamma} \partial^\beta \partial^\gamma
    - \frac{1}{24} \mathcal{P}^{\beta \alpha}
    \tilde{\Gamma}_{\delta \beta \gamma} \partial^\delta \partial^\gamma
    - \frac{1}{24} \mathcal{P}^{\gamma \alpha}
    \tilde{\Gamma}_{\delta \beta \gamma} \partial^\delta \partial^\beta
    \nonumber \\
& \quad {} + \frac{1}{16} \mathcal{P}^{\gamma \alpha}
    \tilde{\Gamma}^\mu_{\nu \gamma}\tilde{\Gamma}^\nu_{\mu \beta} \partial^\beta
    + \frac{1}{16} \mathcal{P}^{\beta \alpha} \tilde{\Gamma}^\mu_{\nu \gamma}
    \tilde{\Gamma}^\nu_{\mu \beta} \partial^\gamma
    + \frac{1}{16}a \mathcal{P}^{\gamma \alpha} \tilde{R}_{\gamma\beta}
    \partial^\beta + \frac{1}{16}a \mathcal{P}^{\beta \alpha}
    \tilde{R}_{\gamma\beta} \partial^\gamma \nonumber \\
& = \frac{1}{8} \tilde{\Gamma}^\alpha_{\beta \gamma}\partial^{\beta}
    \partial^{\gamma}
    + \frac{1}{8} \mathcal{P}^{\gamma \alpha} \tilde{\Gamma}^{\mu}_{\nu \gamma}
    \tilde{\Gamma}^{\nu}_{\mu \beta} \partial^{\beta}
    + \frac{1}{8}a \mathcal{P}^{\gamma \alpha} \tilde{R}_{\gamma\beta}
    \partial^\beta.
\end{align}
Hence, $[S_2,x^\alpha] = C_2(x^\alpha,\sdot)$ which proves \eqref{eq:3.33}.


%

\end{document}